\documentclass[runningheads]{llncs}
\usepackage{graphicx,tikz, subcaption} 
\usepackage{amssymb,amsfonts,mathtools}
\usepackage{hyperref, cleveref}
\usepackage[shortlabels]{enumitem}
\usepackage{algorithm}
\usepackage[noend]{algpseudocode}
\usepackage{etoolbox}
\MakeRobust{\Call}
\usepackage{float}
\usepackage{fixltx2e}
\usetikzlibrary{calc,decorations.pathreplacing,arrows}
\definecolor{eqA}{RGB}{30,90,200}
\definecolor{eqB}{RGB}{200,50,50}
\definecolor{e1col}{RGB}{230,140,0}
\definecolor{e2col}{RGB}{40,150,70}
\definecolor{e3col}{RGB}{140,70,180}
\definecolor{e4col}{RGB}{20,150,170}

\newcommand{\Bsq}{b_\Box}

\title{Solving Square-Submatrix Equation Systems}

\author{
Lorenzo Carfagna\orcidID{0009-0005-9591-057X}, \and
Giovanni~Manzini\orcidID{0000-0002-5047-0196}}
\institute{Department of Computer Science, University of Pisa, Italy}

\authorrunning{L. Carfagna, G. Manzini}

\begin{document}

\maketitle

\begin{abstract}
We consider systems of submatrix equations, that is, sets of equality constraints over square submatrices of the input. 
By generalising the recursive algorithm of Gawrychowski et al. [TCS'20] to two dimensions, we obtain a linear-time procedure that finds a solution for any such input system. 
As an immediate by-product, this yields an optimal-time algorithm for decompressing any two-dimensional macro scheme based on copy operations of sub-squares. 
\keywords{
Two-dimensional equation systems\and
Two-dimensional macro schemes\and
Two-dimensional data compression\and
Repetitiveness Measures
}

\end{abstract}

\section{Introduction}\label{sec:intro}
A substring equation system (SES) $E$, expresses a set of equality constraints over substrings of a string. 
In~\cite{GAWRYCHOWSKI2020}, Gawrychowski et al. show that a generic solution for any $E$ can be computed in linear $O(\vert E\vert+n)$ time where $\vert E\vert $ the number of input equations and $n$ is the length of the solution.
Such resolution algorithm can be applied to solve in linear time a series of well-known string-reconstruction problems such as recovering a string from its prefix, border and cover arrays~\cite{GAWRYCHOWSKI2020}.
Recently, Shibata and Bannai~\cite{Shibata_Bannai_2026} analysed SESs from the perspective of compressed representations of strings. In this setting we are given a string $w$ and we want to find the size of the minimal SES representing uniquely $w$ where in addition to substrings equalities we can use explicit symbol assignments. 
Shibata and Bannai showed that the size of a minimal SES describing a string is a meaningful repetitiveness measure and they proved that every valid string bidirectional macro scheme~\cite{STORER1982} can be transformed into an equivalent SES of the same size. 
In the same paper, the authors also showed that every SES $E$ induces a string attractor~\cite{KP2018} of size $4\vert E\vert$ and that for the Thue–Morse words the size of a minimal string attractor is asymptotically smaller than the size of a minimal SES. 
Another remarkable result about SESs is related to the repetitiveness measure 
$\chi(w)$ defined as the cardinality of the smallest suffixient set for $w$~\cite{DEPUYDT2024}. Shibata and Bannai showed that 
for every string $w$ we can build a SES of size $O(\chi(w))$ representing~$w$. 
This yields an $O(\chi(w))$ representation of $w$ thus establishing the {\em reachability} of measure~$\chi$.

The above results suggest that SESs constitute an interesting compression tool and deserve further investigations.
In this paper we study square-submatrix equation systems (SSES), that is two-dimensional SESs made of equations over square submatrices of the input. 
We show that the recursive algorithm in~\cite{GAWRYCHOWSKI2020} can be generalised to find a generic solution of any SSES $E$ in $O(\vert E\vert +  mn)$ time where $mn$ is the matrix size.  
Next, following~\cite{Shibata_Bannai_2026} we consider SSESs as compression tools with the inclusion of explicit symbol assignments. We compare SSESs with the two-dimensional macro schemes as defined in~\cite{CarfagnaManzini2024} where the input matrix is parsed into, possibly overlapping square phrases. 
We show that every 2D macro scheme can be naturally transformed into an equivalent SSES of the same size in linear time. This fact, together with our resolution algorithm, shows that every valid two-dimensional macro scheme of size $b$ for an $m\times n$ matrix can be decompressed in optimal $O(b+mn)$ time.






\section{Notation}\label{sec:notation}

We denote with $\Sigma^{m\times n}$ the set of all $m\times n$ matrices from a finite alphabet~$\Sigma$. 
For every integer $n\geq 1$, $[n]$ is the set $\{0,\ldots, n-1\}$. 
We assume matrices are $0$-indexed. Given $A\in\Sigma^{m\times n}$ we write $A[i..i+\ell-1][j..j+\ell'-1]$ to refer the $\ell\times \ell'$ submatrix of $A$ with top-left cell at $A[i][j]$ and bottom-right cell at $A[i+\ell-1][j+\ell'-1]$ for any $1\leq \ell \leq m-i$ and $1\leq \ell' \leq n-j$.
Given any submatrix, we number its vertices $0, 1, 2, 3$ anticlockwise starting from top-left one.

A \emph{square bidirectional macro scheme}~\cite{CarfagnaManzini2024} for an $m \times n$ matrix~$A$ consists of a partition of~$A$ into (possibly overlapping) {square} submatrices, called {\em phrases}, such that each submatrix is either an explicit symbol (i.e. it is has size $1\times 1$) or it is a copy of another submatrix (called source). Because of possible phrase overlaps, the same position $(i,j)$ can be contained in different phrases and its value $A[i][j]$ can be copied from different sources. Formally, we define a {\em multi-valued} function $f\colon [m] \times [n]  \to 2^{[m] \times [n]} \cup \{\bot\}$, where $2^{[m] \times [n]}$ is the power set of $[m] \times [n]$. The value $f(i,j)$ is a set whose size is equal to the number of phrases containing $(i,j)$. For each phrase $A[u..u+\ell-1][v..v+\ell-1]$ containing $(i,j)$: if $\ell=1$ we add 
$\bot$ to $f(i,j)$, otherwise if the phrase is copied from $A[u+d_u..u+d_u+\ell-1][v+d_v..v+d_v+\ell-1]$ we add $(i+d_u,j+d_v)$ to $f(i,j)$. 
The scheme is valid if we can eventually retrieve the value of any entry in $A$, that is, for every $(i,j) \in [m]\times [n]$ there must exists a sequence of pairs $\{ (i_h, j_h)\}_{h=1..k}$ such that
\begin{equation}\label{eq:valid}
(i_1,j_1) = (i,j),\qquad
(i_{h+1},j_{h+1}) \in f(i_{h},j_h),\qquad
f(i_{k},j_k) = \bot.    
\end{equation}
We denote by $\Bsq(A)$ the size of the smallest valid bidirectional scheme for $A$.



\begin{definition} [Square Submatrix Equation System]
A Square Submatrix Equation System (SSES) $E$ for a $m\times n$ matrix, is a set of submatrix equality constraints (equations) of the form $A[i..i+\ell-1][j..j+\ell-1]=A[i'..i'+\ell-1][j'..j'+\ell-1]$ for some $(i,j),(i',j') \in [m]\times [n]$ and $1\leq \ell \leq \min\{m-i,n-j\}$ referred in the following as $(i,j,i',j',\ell)$.
A solution for $E$ is any matrix $A$ satisfying all the constraints in $E$ simultaneously.
\end{definition}

Note that every $m\times n$ matrix containing only one symbol would be a solution. We are interested in a solution  $\Phi(E)$ containing a maximal number of different characters since any other solution can be obtained from $\Phi(E)$ by remapping alphabet symbols.
A solution $\Phi(E)$ can be obtained by means of the position graph $G_{pos}(E)$ introduced in~\cite{GAWRYCHOWSKI2020} for strings.  $G_{pos}(E)$ is an undirected graph whose nodes are the positions $[m]\times[n]$ and whose arcs represent the position-wise equalities described by the equations in $E$.

\begin{example}\label{ex:gpos}
Let $m=n=4$ and let $E$ consists of the equations
$e_A=(0,0,\,0,2,\,2)$ and $e_B=(0,0,\,2,0,\,2)$,
i.e. 
$$
A[0..1][0..1]=A[0..1][2..3] \quad\mbox{and}\quad A[0..1][0..1]=A[2..3][0..1].
$$
Each equation induces $4$ position equalities, so
$G_{{pos}}$ has $8$ edges 
as shown~in~Fig.~\ref{fig:gpos}.
\end{example}

\begin{figure}[t]
\centering
\scalebox{0.80}{
\begin{tikzpicture}[x=1.25cm,y=-1.25cm,baseline=(current bounding box.center)]
  \foreach \r/\c/\s in {
     0/0/a,0/1/b,0/2/a,0/3/b, 1/0/c,1/1/d,1/2/c,1/3/d,
     2/0/a,2/1/b,2/2/e,2/3/f, 3/0/c,3/1/d,3/2/g,3/3/h}{
     \node[circle,draw=black!55,minimum size=0.6cm,inner sep=0pt,fill=white]
        (n\r\c) at (\c,\r) {\tiny $\c,\r$};}
  \begin{scope}[eqA,line width=1pt]
     \draw (n00) to[bend left=22] (n02);  \draw (n01) to[bend left=22] (n03);
     \draw (n10) to[bend right=22] (n12); \draw (n11) to[bend right=22] (n13);
  \end{scope}
  \begin{scope}[eqB,line width=1pt]
     \draw (n00) to[bend right=22] (n20); \draw (n10) to[bend right=22] (n30);
     \draw (n01) to[bend left=22] (n21);  \draw (n11) to[bend left=22] (n31);
  \end{scope}
  \node[eqA] at (1.5,-0.85) {\scriptsize $e_A$ (4 edges)};
  \node[eqB,rotate=90] at (-0.90,1.5) {\scriptsize $e_B$ (4 edges)};
\end{tikzpicture}}
\hspace{1.5cm}
\scalebox{0.95}{
\begin{tikzpicture}[x=0.95cm,y=-0.95cm,baseline=(current bounding box.center)]
  \foreach \r/\c/\s in {
     0/0/a,0/1/b,0/2/a,0/3/b, 1/0/c,1/1/d,1/2/c,1/3/d,
     2/0/a,2/1/b,2/2/e,2/3/f, 3/0/c,3/1/d,3/2/g,3/3/h}{
     \node[draw=black!30,minimum size=0.95cm,inner sep=0pt] at (\c,\r) {$\s$};}
  \draw[eqA,line width=1.1pt] (-0.5,-0.5) rectangle (1.5,1.5);
  \draw[eqA,line width=1.1pt,dashed] (1.5,-0.5) rectangle (3.5,1.5);
  \draw[eqB,line width=1.1pt,dashed] (-0.5,1.5) rectangle (1.5,3.5);
  \node[eqA,anchor=south] at (0.5,-0.62) {\scriptsize source};
  \node[eqA,anchor=south] at (2.5,-0.62) {\scriptsize $e_A$};
  \node[eqB,anchor=east,rotate=90] at (-0.9,2.5) {\scriptsize $e_B$};
\end{tikzpicture}}
\caption{The position graph $G_{pos}(E)$ for Example~\ref{ex:gpos} (left) and the corresponding generic solution $\Phi(E)$ (right).}
\label{fig:gpos}
\end{figure}

We can obtain a solution $\Phi(E)$ by computing the connected components of $G_{pos}(E)$ and by assigning a distinct symbol to each connected component. 
In the example shown in Figure~\ref{fig:gpos}, $G_{{pos}}(E)$ has $8$ connected components so the generic solution is the one in Figure~\ref{fig:gpos} (right).
This procedure is correct since any two distinct positions of $\Phi(E)$ contain the same symbol, if and only if they belong to the same connected component of the position graph. 
Since an equation $(i,j,i',j',\ell)$ generates $\ell^2$ edges, the size of the position graph could be $\Theta(mn\vert E\vert)=\omega(mn)$ so the above approach could be inefficient with respect to the optimal solution of cost  $O(|E|+mn)$ time described in the next section.

\section{Optimal resolution of SSES}

In this section we prove the following theorem, which is a generalization to two dimensions of Theorem~16 in~\cite{GAWRYCHOWSKI2020}.

\begin{theorem}\label{theorem: main}
Given a submatrix equation system $E$ for an ${m\times n}$ matrix, we can find a generic solution $\phi(E)$ in optimal $O(|E|+mn)$ time.
\end{theorem}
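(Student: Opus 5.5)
We follow the recursive scheme of Gawrychowski et al.~\cite{GAWRYCHOWSKI2020}: replace $E$ by a smaller system with the same position-graph connectivity, using halved sizes and a constant number of recursive calls. Write each equation $(i,j,i',j',\ell)$ as a pair of $\ell\times\ell$ squares with offset $(d_u,d_v)=(i'-i,j'-j)$. The key structural fact we will use is a \emph{covering lemma}: if $\ell\ge 2$ and $k=2^{\lfloor\log_2 \ell\rfloor}$ (or simply $k=\lceil \ell/2\rceil$), the $\ell\times\ell$ square is the union of the four $k\times k$ squares anchored at its corners (vertices $0,1,2,3$ in the paper's numbering). Hence the equation $(i,j,i',j',\ell)$ is equivalent, as a set of position equalities, to the four equations $(i+a,j+b,i'+a,j'+b,k)$ with $a,b\in\{0,\ell-k\}$. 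These generate exactly the same edges of $G_{pos}(E)$, possibly with repetitions, since every pair $(p,p+(d_u,d_v))$ with $p$ in the big square lies in one of the four corner squares.

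Splitting alone multiplies the number of equations, so we would organise the algorithm by levels $k=2^t$, from the largest down to $k=1$. At level $k$ every equation has size exactly $k$. We first reduce the equations at level $k$ to at most $O(mn)$ of them that generate the same connectivity among \emph{$k$-blocks}. To do this, we regard each top-left corner $(i,j)$ as a node of an auxiliary graph $H_k$ on $[m-k+1]\times[n-k+1]$, with an edge $(i,j)$--$(i',j')$ for each equation. Two $k\times k$ squares joined by a path in $H_k$ are forced equal, so we can replace the edge set of $H_k$ by a spanning forest, computed with union--find or BFS in time linear in its size. This leaves fewer than $mn$ equations of size $k$. We then apply the covering lemma to each surviving equation, producing equations of size $k/2$ (corner squares of size $k/2$ cover a $k$-square exactly when $k$ is a power of two). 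These are merged into level $k/2$ together with the original equations of that size. Originals of arbitrary $\ell$ would be split first, once, into four equations of size $2^{\lfloor\log\ell\rfloor}$, costing $O(|E|)$.

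The running time is then $O(|E|+mn\log\min(m,n))$, which is too much by a log factor. This is the main obstacle. Following Theorem~16 of~\cite{GAWRYCHOWSKI2020}, we would remove the log through a geometric shrinkage. At level $k$ one only needs nodes whose corners are aligned to a grid of spacing roughly $k/2$ in each dimension, i.e.\ $O(mn/k^2)$ classes. We would achieve this by replacing a $k$-square equation with equations whose sources are snapped to such a grid. The plan is to represent each $k$-square by its decomposition relative to a fixed grid of $k/2$-blocks, then replace its equation by $O(1)$ equations of size $k/2$ between grid-aligned blocks and the corresponding shifted blocks. Since in 2D the shift is arbitrary, we instead keep one representative per equivalence class of offsets and use the periodicity lemma (two equations with overlapping squares induce a 2D period) to merge them. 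The budget is then $\sum_t mn/4^t=O(mn)$ for the spanning forests plus $O(|E|)$ for the initial splitting.

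If the grid snapping turns out to be unworkable, the fallback is to run the 1D algorithm of~\cite{GAWRYCHOWSKI2020} row-wise. Each square equation is converted into $\ell$ row equations of length $\ell$, and the shared structure is compressed by treating each column strip of height $k$ as a string over block-symbols. Once the level $k=1$ is reached, the remaining system is a graph on $[m]\times[n]$ with $O(mn+|E|)$ edges. We compute its connected components and assign a distinct symbol to each, which yields the generic solution $\Phi(E)$ by the correctness argument given after Example~\ref{ex:gpos}. The point we expect to require the most care is the 2D version of the ``same offset'' merging that makes the level sizes decay geometrically.
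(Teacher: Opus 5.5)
Your covering lemma and the power-of-two level scheme are correct: a spanning forest of $H_k$ per level, followed by splitting each $k$-square into four half-size corner squares. As you note, this only gives $O(|E|+mn\log\min(m,n))$. Everything after that is a list of intentions rather than an argument, and none of the three ideas works as stated.
\begin{itemize}
\item Grid snapping cannot align both ends of an equation. The two squares of $(i,j,i',j',k)$ have an arbitrary offset $(i'-i,j'-j)$, and any sub-square you cut out is shifted by the same amount on both sides. A common shift therefore aligns $(i,j)$ and $(i',j')$ only when $i\equiv i'$ and $j\equiv j'$ modulo the spacing. If you align only one side, the other endpoint still ranges over $\Theta(mn)$ positions, and the spanning-forest bound does not shrink.
\item Merging equal-offset equations through a 2D periodicity argument is not made precise. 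It produces non-square regions, and up to $\Theta(mn)$ distinct offsets remain.
\item The row-wise fallback creates $\sum_e \ell_e$ row equations. This can be $\Theta(mn\min(m,n))$ even after reduction, so it is not $O(|E|+mn)$.
\end{itemize}
So the linear bound, which is the whole content of the theorem, is not established.

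The missing idea is the choice of the sparse anchor set. The paper does not use a grid; it anchors equations at $k$-special coordinates, meaning the $k$ lowest hexadecimal digits are all nonzero. This set has density at most $(15/16)^k$ per coordinate. Membership means \emph{avoiding} one digit value rather than \emph{hitting} a prescribed residue. Hence for any four integers you can build, digit by digit, a shift $d$ with digits in a window of five values that makes all four shifted coordinates special simultaneously (Lemma~\ref{lemma:l1}: four forbidden values, five candidates).

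This is what lets \textsc{SpecialSplit} cut a long $k$-special equation into one $(k+1)$-short piece plus three pieces anchored at $(k+1)$-special corners. These pieces have different lengths and different reference corners. So equations are typed by corner, and \textsc{Reduce} keeps a maximum spanning forest per type, since on a same-type cycle the shortest equation is implied by the others. This bounds the number of equations at depth $k$ by $4(15/16)^{2k}mn$. The recursion \textsc{Shorten}, with \textsc{SimpleSplit} bringing lengths from $2\cdot16^{k+1}$ down to $2\cdot16^{k}$, then costs $\sum_k O(k)(15/16)^{2k}mn=O(mn)$.

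Your unweighted, top-left-anchored spanning forest per size class has no analogue of this. It also cannot absorb the mixed lengths and anchors that any shifting scheme inevitably produces.
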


In this section we consider a {\em typed} representations of $SSES$s. 
To represent any equation in $E$ we permit tuples to store the coordinates of {\em any} pair of homologue corners of the two matching squares and we add an additional component in $\{0,1,2,3\}$ to denote the reference corner.  
For example the equation $A[i..i+\ell-1][j..j+\ell-1]=A[i'..i'+\ell-1][j'..j'+\ell-1]$ can be stored as  $(i,j,i',j',\ell,0)$, where 0 denotes the upper-left corner, or as $(i+\ell-1,j+\ell-1,i'+\ell-1,j'+\ell-1,\ell,2)$ where 2 denotes the bottom-right corner.  
In the following we say that an equation $e=(i,j,i',j',\ell,t)$ with $t\in \{0,1,2,3\}$ has type $t$. The reason for introducing types, will be apparent after Definition~\ref{def:d1}.


\begin{definition}[Equation graph]
Given an $SSES$ $E$, the (typed) equation graph $G_{eq}(E)=(V,A)$ is the undirected labeled graph defined as follow. For every typed equation $(i,j,i',j',\ell,t) \in E$ we add the pairs $(i,j)$ and $(i',j')$ to the vertex set $V$ and the arc $(i,j)\leftrightarrow (i',j')$ with label ${\ell,t}$ to the arcs set $A$.
\end{definition}

Note that, contrary to $G_{pos}(E)$, the equation graph $G_{eq}(E)$ has size $\Theta(\vert E\vert)$.
Since there is a bijection between (typed) $SSES$s and (typed) equation graphs, we use these objects interchangeably and when clear from the context, we write $G_{pos}$ and $G_{eq}$ for the corresponding graphs induced by $E$. 



%


The next two definitions extend the notion of $k$-special and $k$-short equations in~\cite{GAWRYCHOWSKI2020} to two-dimensional square equations using base $b=16$.

\begin{definition}\label{def:kspecial}
An integer $i\geq 0$ is $k$-special in base $16$ if all the $k$ less significant hexadecimal digits of $i$ are non-zero. 
We denote with $\mathbb{S}_k$ the set of all the integers that are $k$-special in base $16$ and we define $\mathbb{S}_k^n=\mathbb{S}_k\cap [n]$ where $\mathbb{S}_0=\mathbb{N}$.
Note that $|\mathbb{S}_k^n|\leq (15/16)^k n$ (see~\cite[Fact 11a]{GAWRYCHOWSKI2020}) and $\mathbb{S}_{k+1}\subset \mathbb{S}_k$ hold.
\end{definition}






\begin{definition}\label{def:d1}
We say that an equation $e=(i,j,i',j',\ell,t)$ is $k$-short if $\ell\leq 2\cdot16^k$ and that $e$ is $k$-special if  $\{i,j,i',j'\}\subset \mathbb{S}_k$.
We say that a system of equations $E$ is $k$-special/$k$-short if all its equations are $k$-special/$k$-short. 
\end{definition}

\noindent

\noindent
A $k$-special equation $(i,j,i',j',\ell,t)$ has a constraint on the coordinates $i,j,i',j'$ of the reference corner $t$. 
Indeed, we use types because we need equations which are $k$-special with respect to any corner (see the proofs of Lemmas~\ref{lemma:specialsplit}~\ref{lemma:simplesplit}).

We say that $E$ is equivalent to $E'$, and write $E\equiv E'$, if $\Phi(E)=\Phi(E')$. 
We call $E$ type-acyclic if for every equation type $t$ the equation graph does not contain any cycle formed by type-$t$ edges.

\begin{lemma}\label{fact:type-a}
If $E$ is $k$-special and type-acyclic then $|E|< 4(15/16)^{2k} mn$. 
\end{lemma}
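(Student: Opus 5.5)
Split $E$ by type: $E=E_0\cup E_1\cup E_2\cup E_3$, where $E_t$ contains the type-$t$ equations. It then suffices to show $|E_t|<(15/16)^{2k}mn$ for each $t$, and sum over the four types.

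Fix a type $t$ and let $H_t$ be the subgraph of $G_{eq}(E)$ made of the arcs of $E_t$. Because $E$ is type-acyclic, $H_t$ has no cycles, so it is a forest. A forest with $N$ vertices has at most $N-1<N$ edges, provided $N\ge 1$. If $E_t$ is empty the bound is trivial. Note that if two equations of type $t$ joined the same pair of endpoints, those parallel arcs would form a cycle of length two, which acyclicity also excludes. Hence $|E_t|$ equals the number of arcs of $H_t$, which is less than the number of vertices of $H_t$.

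It remains to bound the number of vertices of $H_t$. Every vertex is the reference corner $(i,j)$ or $(i',j')$ of some equation in $E_t$. Since $E$ is $k$-special, both coordinates lie in $\mathbb{S}_k$. Both also lie in the matrix, so the vertex is a pair in $\mathbb{S}_k^m\times\mathbb{S}_k^n$. By Definition~\ref{def:kspecial}, i.e. \cite[Fact 11a]{GAWRYCHOWSKI2020},
\[
|\mathbb{S}_k^m\times\mathbb{S}_k^n|\le (15/16)^k m\cdot(15/16)^k n=(15/16)^{2k}mn .
\]
Therefore $|E_t|<(15/16)^{2k}mn$ for every $t$, and summing over the four types gives $|E|<4(15/16)^{2k}mn$.

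The only delicate point is interpreting type-acyclicity correctly. It must exclude multi-edges, so that $|E_t|$ counts arcs of a simple forest. It must also be applied per type, since arcs of different types may share vertices and form cycles in the full graph. That is exactly why the bound picks up the factor $4$ rather than $1$.
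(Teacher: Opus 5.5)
Your proof is correct and follows the paper's argument: type-acyclicity makes each type-$t$ subgraph a forest with fewer edges than vertices, and $k$-speciality confines all vertices to $\mathbb{S}_k^m\times\mathbb{S}_k^n$, whose size is at most $(15/16)^{2k}mn$, so summing over the four types gives the factor $4$. Your extra care about parallel arcs, empty types, and the strictness of the inequality only makes explicit what the paper's one-line proof leaves implicit.
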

\begin{proof}
Since $E$ is is type-acyclic, for every $t\in \{0,1,2,3\}$, the number of type-$t$ edges in $G_{eq}$ is at most $\vert V\vert-1$ so $G_{eq}$ has at most $4\vert V\vert$ edges. Since $E$ is also $k$-special it is $\vert V \vert \leq \vert\mathbb{S}_k^m\vert \vert\mathbb{S}_k^n\vert \leq (15/16)^{2k} mn$ by Definition~\ref{def:kspecial}.\qed
\end{proof}

Before entering the technical details we sketch the idea behind the proof of Theorem~\ref{theorem: main} that follows the approach in~\cite{GAWRYCHOWSKI2020}.
Given $E$ we compute an equivalent system $E'$ with $O(mn)$ equations of size at most $2\times 2$. For such an $E'$ computing $\Phi(E')=\Phi(E)$ using its position graph, as described in Section~\ref{sec:notation}, takes optimal $O(mn)$ time. To build $E'$ we find smaller equivalent equations partitioning squares into smaller squares. 
Since this increases the number of equations the algorithm alternates two type of operations. 
The {\sc Split} operations lower the side length of the equations while {\sc Reduce} operations discard redundant equations and keeps their number under control. 
Redundant equations are discarded using maximum spanning forests (MSFs) since the shortest equation on a typed-cycle, is implied by the others (Lemma~\ref{lemma:MSF}).
To bound the overall cost, we restrict where equations are anchored: the splits procedures force the equations to have a corner located at a $k$-special position and we exploit the fact that $k$-special integers decays geometrically.
The core of the procedure is the recursive routine {\sc Shorten}$(E,k)$ which turns a $k$-special, type-acyclic system into an equivalent one which is also $k$-short.
For $k=0$ {\sc Shorten} accepts any equation system (all integers are 0-special) and returns equations of size at most~2 from which we can derive $\Phi(E)$ in $O(mn)$ time with the naive algorithm of Section~\ref{sec:notation}.


\begin{lemma}[{\sc Reduce}]\label{lemma:reduce}
Given an equation system $E$ we can compute in $O(|E|)$ time a type-acyclic system $E'\subseteq E $ such that $E' \equiv E$.
If $E$ is $k$-special then $|E'| \leq 4(15/16)^{2k} mn$.
\end{lemma}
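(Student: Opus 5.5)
Within each type, I will run a maximum spanning forest computation on the edges of that type, using the side length $\ell$ as weight. The intermediate fact I need is the one the overview attributes to Lemma~\ref{lemma:MSF}: on any cycle made of type-$t$ edges, the equation with the smallest $\ell$ is implied by the others. The reason is that all edges of the cycle use the same reference corner $t$. Consider the square of side $\ell_{\min}$ anchored with corner $t$ at each vertex of the cycle. Each such square is contained in the correspondingly anchored larger square of every other equation on the cycle. So the equations on the rest of the cycle give, position by position, a path in $G_{pos}$ between the two endpoints' $\ell_{\min}$-squares. Hence removing the shortest equation leaves the connected components of $G_{pos}$, and therefore $\Phi$, unchanged.

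Partition $E$ into four classes $E_0,\dots,E_3$ by type. For each $t$, compute a maximum-weight spanning forest $F_t$ of the multigraph $G_{eq}(E_t)$ with weights $\ell$, and set $E'=F_0\cup F_1\cup F_2\cup F_3\subseteq E$. By construction, $E'$ has no monochromatic cycle, so it is type-acyclic. For equivalence, I use the cycle property: every edge $e\in E_t\setminus F_t$ closes a cycle in $F_t\cup\{e\}$ on which $e$ has minimum weight, since $F_t$ is a maximum spanning forest. Applying Lemma~\ref{lemma:MSF} to each discarded edge in turn, each one is implied by $F_t$ alone, which never changes. Position-graph components therefore do not change, and $E'\equiv E$. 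The bound for $k$-special $E$ then follows immediately from Lemma~\ref{fact:type-a}, because $E'\subseteq E$ inherits $k$-specialness.

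The main obstacle is achieving $O(|E|)$ time rather than $O(|E|\log|E|)$ or $O(|E|\,\alpha)$. Weights are integers in $[1,\min(m,n)]$, so I first radix- or counting-sort all edges by decreasing $\ell$ in $O(|E|+\min(m,n))$ time. Then I run Kruskal on each type. The union-find operations must be linear, and I would reduce this to an offline problem: the vertex sets are pairs in $[m]\times[n]$, and I relabel the vertices to $[|V|]$ by radix sort in $O(|E|)$ time. The aim is to invoke a linear-time maximum spanning forest algorithm for integer-weighted graphs. If a deterministic one is not directly available, I would instead use the observation from Gawrychowski et al.\ that the {\sc Reduce} step only needs a forest that is ``maximum'' in this cycle sense, and that their implementation attains linear time through the Gabow--Tarjan static-tree union-find after sorting.

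A cleaner alternative is to cite the linear-time procedure for the one-dimensional case, since it is the same graph problem on the equation graph. The sorting term $O(\min(m,n))$ is absorbed because, in the intended use, $mn$ is charged separately; alternatively one can sort with a bucket array shared across calls. I expect this complexity accounting to be the delicate point, while the correctness argument above is routine.
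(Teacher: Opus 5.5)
Your proposal is correct and matches the paper's proof: take $E'=\bigcup_{t=0}^3 MSF(G^t_{eq})$ with weights $\ell$, obtain equivalence from the cycle property (your argument is exactly Lemma~\ref{lemma:MSF}), and obtain the size bound from Lemma~\ref{fact:type-a}, since $E'\subseteq E$ stays $k$-special. For the running time the paper simply invokes the linear-time maximum spanning forest algorithm of~\cite{FREDMAN}, which is your primary route; the Gabow--Tarjan fallback is unnecessary and would not work as stated anyway, because the unions Kruskal's algorithm performs depend on the forest being built, so they do not form a static-tree set-union instance.
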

\begin{proof}
We use Lemma~\ref{lemma:MSF} and set $E'=\bigcup_{t=0}^3 MSF(G^t_{eq})$. 
The time bound follows by observing that the MSF algorithm in~\cite{FREDMAN} takes linear time. To prove the second part, observe that since $E' \subseteq E$, if $E$ is $k$-special the same is true for $E'$ hence, by Lemma~\ref{fact:type-a}, it is $|E'| < 4(15/16)^{2k} mn$ as claimed.\qed
\end{proof}

The following pseudocode of {\sc Shorten} is that in~\cite{GAWRYCHOWSKI2020}, with our 2D-adapted implementations of the {\sc Split} (Lemmas~\ref{lemma:specialsplit}~\ref{lemma:simplesplit}) and {\sc Reduce} (Lemma~\ref{lemma:reduce}) operations.

\begin{algorithm}
\caption{\textsc{Shorten}$(E,k)$}
\label{alg:shorten}
\begin{algorithmic}[1]
\Function{Shorten}{$E,k$}  \Comment{$E$ must be $k$-special and type-acyclic}
    \If{$E = \emptyset$} \Return{$\emptyset$}
    \EndIf
    \State $(E_1,E_2) \gets \Call{SpecialSplit}{E,k}$ \label{line:specialsplit}
    \State $E_2' \gets \Call{Reduce}{E_2}$ \label{line:reduce1}
    \State $F \gets \Call{Shorten}{E_2',k+1}$ \label{line:reccall}
    \State $F' \gets \Call{SimpleSplit}{E_1 \cup F,k}$ \label{line:simplesplit}
    \State $F'' \gets \Call{Reduce}{F'}$ \label{line:reduce2}
    \State \Return{$F''$} \Comment{$F''$ is $k$-short, $k$-special and type-acyclic, and $F'' \equiv E$}
\EndFunction
\end{algorithmic}
\end{algorithm}

\begin{lemma}\label{lemma:shorten}
Algorithm~\ref{alg:shorten} is correct and terminates in $O(mn)$ time.
\end{lemma}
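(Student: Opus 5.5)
The plan is an induction on the recursion depth, treating correctness and running time separately. Throughout, the properties of \textsc{SpecialSplit} and \textsc{SimpleSplit} (Lemmas~\ref{lemma:specialsplit} and~\ref{lemma:simplesplit}) are used as black boxes, with these contracts:
\begin{itemize}
\item \textsc{SpecialSplit}$(E,k)$ takes a $k$-special system $E$ and returns $E_1\cup E_2\equiv E$. Here $E_1$ is $k$-special and $k$-short, $E_2$ is $(k+1)$-special, and $|E_1|+|E_2|=O(|E|)$. The running time is $O(|E|)$ plus possibly $O(|\mathbb{S}^m_k||\mathbb{S}^n_k|)$.
\item \textsc{SimpleSplit}$(E_1\cup F,k)$ takes a $k$-special system whose equations are $(k+1)$-short. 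It returns an equivalent $k$-special, $k$-short system, of size and time linear in the input.
\end{itemize}

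\textbf{Correctness (induction on $m,n$ minus $k$).} Recursion terminates because $|\mathbb{S}_k^m||\mathbb{S}_k^n|\le (15/16)^{2k}mn<1$ once $k$ is large. For such $k$, no $k$-special equation can exist, so $E=\emptyset$ and the base case returns $\emptyset$ correctly. For the inductive step, suppose $E$ is $k$-special and type-acyclic.
\begin{itemize}
\item Line~\ref{line:specialsplit} gives $E\equiv E_1\cup E_2$.
\item Lemma~\ref{lemma:reduce} makes $E_2'$ type-acyclic, $(k+1)$-special (it is a subset of $E_2$), and equivalent to $E_2$.
\item By induction, $F$ is $(k+1)$-short and $(k+1)$-special, hence $k$-special (since $\mathbb{S}_{k+1}\subset\mathbb{S}_k$), and $F\equiv E_2'$.
\end{itemize}
One point needs care: equivalence is preserved under unions. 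I would argue this via the position graph. $\Phi(E)=\Phi(E')$ means $G_{pos}(E)$ and $G_{pos}(E')$ have the same connected components. Hence $G_{pos}(E_1\cup E_2)$ and $G_{pos}(E_1\cup F)$ also have the same components. Then $E_1\cup F$ is $k$-special and $(k+1)$-short, which is exactly the precondition of \textsc{SimpleSplit}. It yields a $k$-short, $k$-special $F'\equiv E$. Finally \textsc{Reduce} keeps a subset of $F'$, so $F''$ stays $k$-short and $k$-special, becomes type-acyclic, and remains equivalent to $E$.

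\textbf{Running time.} Let $T(k)$ be the cost of a call at level $k$, excluding the recursive call.
\begin{itemize}
\item The input $E$ is type-acyclic and $k$-special, so $|E|<4(15/16)^{2k}mn$ by Lemma~\ref{fact:type-a}.
\item \textsc{SpecialSplit} and the first \textsc{Reduce} cost $O(|E|+(15/16)^{2k}mn)$.
\item $E_2'$ has size $O((15/16)^{2(k+1)}mn)$ by Lemma~\ref{lemma:reduce}, and the recursive call returns $F$, which is type-acyclic and $(k+1)$-special, so $|F|=O((15/16)^{2k}mn)$.
\item \textsc{SimpleSplit} and the final \textsc{Reduce} are linear in $|E_1|+|F|=O((15/16)^{2k}mn)$.
\end{itemize}
So $T(k)=O((15/16)^{2k}mn)$. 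Summing the geometric series over $k'\ge k$ gives total $O((15/16)^{2k}mn)$, which is $O(mn)$ for $k=0$.

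\textbf{Main obstacle.} The delicate step is making sure the split lemmas really deliver output sizes $O(|E|)$ and running times that do not depend on $\ell$. For instance, \textsc{SimpleSplit} must cut a $(k+1)$-short square into a constant number of $k$-short squares, each anchored at a $k$-special corner of the appropriate type. Any $\ell$-dependent or non-constant blow-up would break the geometric sum. I would rely on the stated lemmas for this, and check only that the constants are absolute, e.g.\ at most $16^2$ pieces per equation.
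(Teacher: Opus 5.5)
Your proof is correct and follows essentially the same route as the paper: backward induction on $k$ with the base case where no $k$-special positions exist, a check of each call's pre- and postconditions, and a per-level bound of $O((15/16)^{2k}mn)$ from Lemma~\ref{fact:type-a} summed over the levels; your argument via $G_{pos}$ that equivalence is preserved under unions makes explicit a step the paper leaves implicit. One correction to your stated contracts: Lemma~\ref{lemma:specialsplit} only makes $E_1$ $(k+1)$-short (which is all you actually use), and the split lemmas cost $O((k+1)|E|)$ and $O(k|E|)$ time, so level $k$ costs $O((k+1)(15/16)^{2k}mn)$---still enough, since $\sum_{k\ge 0}(k+1)(15/16)^{2k}=O(1)$, so a polynomial-in-$k$ factor does not break the sum despite your remark about non-constant blow-ups.
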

\begin{proof}
We prove the Lemma by backward induction on $k$.
The base case is any $k\geq\lfloor \log_{16}\min(m,n) \rfloor+2$, for which the algorithm is correct since $E=\emptyset$. 
Now consider a generic $k\geq 0$ and assume $E$ is type-acyclic and $k$-special. At line~\ref{line:specialsplit} {\sc SpecialSplit} (Lemma~\ref{lemma:specialsplit}) 
partitions every $k$-special equation into $O(1)$ equations that are either $(k+1)$-short or anchored at a ($k+1$)-special position.
The latter are pruned (Line~\ref{line:reduce1}) using Lemma~\ref{lemma:reduce} and passed to the recursive call at depth $k+1$, which makes them $(k+1)$-short (Line~\ref{line:reccall}). All the $k+1$-short pieces are collected and brought down to size $\leq 2\cdot16^k$ by the {\sc SimpleSplit} call (Lemma~\ref{lemma:simplesplit}), and a final {\sc Reduce} call restores type-acyclicity.
Since the number of equations alive at depth $k$ is bounded by $O((15/16)^{2k} mn)$, the work per level shrinks geometrically and for $k=0$ the whole recursion costs $O(mn)$ time.\qed

\end{proof}



\noindent{\bf Proof of Theorem~\ref{theorem: main}}
Given $E$, after an $O(mn)$ one-time pre-processing step that removes the need for any subsequent vertex relabeling as in~\cite{GAWRYCHOWSKI2020},
we compute $E'$ such that $E'\equiv E$ and $E'$ is type acyclic (and 0-special) in $O(\vert E\vert)$ time (Lemma~\ref{lemma:reduce}). 
Then, in $O(mn)$ time (Lemma~\ref{lemma:shorten}) we compute $E'' = \mbox{\sc Shorten}(E',0)$. By Lemma~\ref{lemma:shorten} $E''\equiv E'$, and $E''$ is 0-short (equations have size at most $2\times 2$) and type acyclic ($\vert E''\vert = O(mn)$). Hence, we can compute $\Phi(E'') = \Phi(E)$ in additional $O(mn)$ time by building $G_{pos}(E'')$ and computing its connect components as outlined in Section~\ref{sec:notation}.\qed

\section{Applications to repetitiveness measures}

\newcommand{\Eq}{\mathsf{Eq}}
\newcommand{\Ch}{\mathsf{Ch}}

As in~\cite{Shibata_Bannai_2026}, we use SSES to define a repetitiveness measure by introducing character assignments, and we compare this measure with 2D bidirectional macro schemes.  

\begin{definition}
    A SSES representation for a matrix $A\in\Sigma^{m\times n}$ is a quadruplet $(m,n,\Eq,\Ch)$ where $\Eq$ is a set of square submatrix equations, and $\Ch$ is a set of character assignments of the form $A[i][j]=c$ with $c\in \Sigma$, $0 \leq i < m$, $0 \leq j < n$.
    The size of the representation is $|\Eq| + |\Ch|$, and the measure $s(A)$ is defined as the size of the smallest SSES representation whose unique solution is the matrix~$A$. 
\end{definition}

For example, for the matrix in Fig.~\ref{fig:gpos}, it is easy to see that the smallest SSES representation is the one shown in the figure and has size 10.

\begin{lemma}\label{lemma:reduction}
For any valid squared bidirectional macro scheme $B$ for a matrix $A\in \Sigma^{m\times n}$, there exists an SSES representation $E$ of size $|B|$ whose unique solution is the matrix~$A$. Given $B$, we can compute $E$ in $O(\vert B\vert)$ time.
\end{lemma}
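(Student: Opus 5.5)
The construction is the natural one-to-one translation of phrases. For each phrase of $B$ we emit exactly one object of $E$. An explicit phrase $A[u][v]=c$ (a $1\times 1$ phrase) becomes the character assignment $A[u][v]=c$ in $\Ch$. A copied phrase $A[u..u+\ell-1][v..v+\ell-1]$ with source $A[u+d_u..u+d_u+\ell-1][v+d_v..v+d_v+\ell-1]$ becomes the equation $(u,v,u+d_u,v+d_v,\ell)$ in $\Eq$. This clearly gives size $|\Eq|+|\Ch|=|B|$. Each phrase is processed in $O(1)$ time, so the construction takes $O(|B|)$ time.

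It then remains to show that $A$ is the unique solution of $(m,n,\Eq,\Ch)$. I will split this into existence and uniqueness.

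\emph{Existence.} $A$ is a solution because every copy phrase of $B$ is by definition equal to its source in $A$. Every explicit phrase also stores the actual symbol of $A$ at that position.

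\emph{Uniqueness.} Let $A'$ be any matrix satisfying all constraints, and fix a position $(i,j)$. Validity of $B$ gives a chain $(i_1,j_1)=(i,j),\dots,(i_k,j_k)$ as in~\eqref{eq:valid}. I will show by backward induction on $h$ that $A'[i_h][j_h]=A[i_h][j_h]$.
\begin{itemize}
\item Base step, $h=k$: since $\bot\in f(i_k,j_k)$, the position $(i_k,j_k)$ is a $1\times1$ explicit phrase. Its character assignment forces $A'[i_k][j_k]=A[i_k][j_k]$.
\item Inductive step: the relation $(i_{h+1},j_{h+1})\in f(i_h,j_h)$ means some copy phrase containing $(i_h,j_h)$ maps it to $(i_{h+1},j_{h+1})$. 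The corresponding equation in $\Eq$ then includes the position-wise equality $A'[i_h][j_h]=A'[i_{h+1}][j_{h+1}]$. By the inductive hypothesis this equals $A[i_{h+1}][j_{h+1}]=A[i_h][j_h]$.
\end{itemize}
Hence $A'=A$.

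One technical point deserves care. The definition states $f(i_k,j_k)=\bot$, but $f$ is multi-valued, so I will read this as ``the position lies in an explicit phrase'', i.e.\ $\bot\in f(i_k,j_k)$. Under this reading the base case goes through directly. Beyond this, I expect no real obstacle: the whole argument is a direct correspondence between the chain in~\eqref{eq:valid} and edges of $G_{pos}$ joined to an assigned position.
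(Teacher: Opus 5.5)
Your proposal is correct and uses the same construction as the paper: one equation per copied phrase and one character assignment per explicit phrase, giving size $|B|$ in $O(|B|)$ time. The paper justifies uniqueness with a single remark that $B$ uniquely represents $A$, whereas your backward induction along the chain in~\eqref{eq:valid}, reading the terminal condition as $\bot\in f(i_k,j_k)$, spells that step out explicitly.
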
 
\begin{proof}
For a copied $\ell\times\ell$ phrase $p\in B$ with origin and destination with top left corner in $(i,j)$ and $(i',j')$ respectively we add to $E$ the equation $(i,j,i',j',\ell)$. 
For any explicit symbol in $B$ we add to $E$ the corresponding character assignment. 
Since $B$ uniquely represents~$A$ the same is true for~$E$.\qed
\end{proof}


\begin{lemma}
    For every matrix $A$ it is $s(A) \leq \Bsq(A) \leq 2 s(A)$.
\end{lemma}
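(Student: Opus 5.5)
The left inequality $s(A)\le \Bsq(A)$ follows from Lemma~\ref{lemma:reduction}. Take a smallest valid square bidirectional macro scheme $B$ for $A$, with $|B|=\Bsq(A)$. The lemma turns it into an SSES representation of size $|B|$ whose unique solution is $A$, so $s(A)\le \Bsq(A)$.

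For the right inequality $\Bsq(A)\le 2s(A)$, I would start from a minimum SSES representation $(m,n,\Eq,\Ch)$ with unique solution $A$ and build a valid macro scheme of size at most $2|\Eq|+|\Ch|\le 2s(A)$. The construction is direct.
\begin{itemize}
\item Each character assignment $A[i][j]=c$ becomes an explicit $1\times1$ phrase.
\item Each equation $(i,j,i',j',\ell)$ becomes \emph{two} phrases: the square at $(i',j')$ copied from the square at $(i,j)$, and the square at $(i,j)$ copied from the square at $(i',j')$.
\item If a cell of $A$ is covered by no phrase, it must be determined by $\Ch$ alone, since otherwise it would be a singleton component in $G_{pos}$ with no assignment, contradicting uniqueness. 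Such a cell therefore already has an explicit phrase.
\end{itemize}
So the phrases cover $[m]\times[n]$, and since the scheme allows overlaps this is a legitimate (possibly overlapping) parsing. Each copy is consistent with $A$, because $A$ satisfies every equation. The two directed copies per equation make the copy relation $f$ symmetric. Concretely, for every edge $(p,q)$ of $G_{pos}(\Eq)$ we get both $q\in f(p)$ and $p\in f(q)$, and every cell that has an assignment contributes $\bot$ to its own $f$-set.

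Validity is the main step, and it uses uniqueness. Fix a cell $(i,j)$ and let $C$ be its connected component in $G_{pos}(\Eq)$. If $C$ contained no cell carrying a character assignment, we could change the symbol on all of $C$ to a fresh symbol, or to a different symbol of $\Sigma$ when $|\Sigma|\ge2$. The result would still satisfy all equations and all assignments, contradicting uniqueness of the solution. Hence $C$ contains a cell $(i_k,j_k)$ with $\bot\in f(i_k,j_k)$. A path in $C$ from $(i,j)$ to $(i_k,j_k)$ then follows edges of $G_{pos}$, and because $f$ is symmetric, each step satisfies $(i_{h+1},j_{h+1})\in f(i_h,j_h)$. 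This gives exactly the sequence required by~\eqref{eq:valid}.

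The one definitional obstacle is the degenerate alphabet $|\Sigma|=1$, where uniqueness holds trivially. I would handle it by the convention that $\Sigma$ has at least two symbols, or by noting that the fresh-symbol argument applies with respect to generic solutions. A second subtlety is that \eqref{eq:valid} asks for $f(i_k,j_k)=\bot$ rather than $\bot\in f(i_k,j_k)$. To meet this literally, I would make the explicit cell the endpoint of the chain, reading $\bot$ as terminating the chain as soon as it is available, consistent with the multi-valued definition. The resulting size is $2|\Eq|+|\Ch|\le 2(|\Eq|+|\Ch|)=2s(A)$.
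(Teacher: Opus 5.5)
Your proof is correct and is essentially the paper's argument. The left inequality comes from Lemma~\ref{lemma:reduction}; for the right one, each assignment becomes an explicit phrase, each equation becomes two mutually copying phrases, and validity is read off from paths in $G_{pos}(\Eq)$ to assigned cells. You also supply two points the paper leaves implicit: uniqueness forces every component of $G_{pos}(\Eq)$, including isolated cells covered by no equation, to contain an assignment, and your $|\Sigma|\ge 2$ caveat is genuinely needed, since over a unary alphabet the empty representation gives $s(A)=0<1\le\Bsq(A)$.
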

\begin{proof}
The inequality on the left derives from Lemma~\ref{lemma:reduction}. 
To prove the other one, given any SSES $E$ representation for $A$, we build a valid macro scheme $B$ of size $2\vert E\vert$ for $A$.
For every symbol assignment in $E$ we add the corresponding explicit symbol to $B$.
For every $(i,j,i',j',\ell)\in E$ we add to $B$ the two phrases $A[i..i+\ell-1][j..j+\ell-1]$ and $A[i'..i'+\ell-1][j'..j'+\ell-1]$, each using the other as its source.
$B$ is valid for $A$ since for every $(i,j)$ in $G_{pos}$ the path leading to a symbol assignment corresponds to a sequence $\{ (i_h, j_h)\}$ that satisfies~\eqref{eq:valid}.\qed
\end{proof}


Because of phrase overlapping, to decompress a 2D macro scheme $B$, a naive algorithm will visit the graph induced by the function $f_B\colon [m] \times [n]  \to 2^{[m] \times [n]} \cup \{\bot\}$. The visit could take time proportional to the size of the graph that in some cases could be $\Theta(\vert B\vert mn)$.
From  Lemma~\ref{lemma:reduction} and Theorem~\ref{theorem: main}, we get an optimal decompression algorithm.

\begin{corollary}
We can decompress any valid two-dimensional macro scheme of size $b$ of an $m\times n$ matrix in $O(b+mn)$ time.
\end{corollary}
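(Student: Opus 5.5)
The plan is to chain Lemma~\ref{lemma:reduction} with Theorem~\ref{theorem: main}, adding one extra step to deal with the explicit symbols. Given a valid square bidirectional macro scheme $B$ of size $b$ for an $m\times n$ matrix $A$, I first apply Lemma~\ref{lemma:reduction}. In $O(b)$ time this gives an SSES representation $(m,n,\Eq,\Ch)$ with $|\Eq|+|\Ch|\le b$, whose unique solution is $A$. Next I discard $\Ch$ and run the algorithm of Theorem~\ref{theorem: main} on the system $\Eq$ alone. This produces the generic solution $\Phi(\Eq)$ in $O(|\Eq|+mn)=O(b+mn)$ time. In that matrix two cells carry the same symbol if and only if they lie in the same connected component of $G_{pos}(\Eq)$. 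I represent $\Phi(\Eq)$ concretely as an $m\times n$ array of component identifiers taken from $[mn]$; the vertex relabeling already mentioned in the proof of Theorem~\ref{theorem: main} makes this possible.

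The remaining step is to turn component identifiers into actual symbols. I allocate an array $\sigma$ of size $mn$ indexed by component identifier, with every entry initially undefined. For each assignment $A[i][j]=c$ in $\Ch$ I set $\sigma[\Phi(\Eq)[i][j]]\gets c$. Finally I output the matrix whose cell $(i,j)$ holds $\sigma[\Phi(\Eq)[i][j]]$. This step costs $O(|\Ch|+mn)$ time, so the overall bound is $O(b+mn)$.

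The main obstacle is showing that the output is correct, and that it is correct because $B$ is valid. Validity, equation~\eqref{eq:valid}, supplies for every cell $(i,j)$ a chain of copy steps ending at an explicit symbol. Each copy step $(i_{h+1},j_{h+1})\in f(i_h,j_h)$ comes from a phrase, and Lemma~\ref{lemma:reduction} turns that phrase into an equation. The position pair $(i_h,j_h),(i_{h+1},j_{h+1})$ is therefore an edge of $G_{pos}(\Eq)$. It follows that every connected component of $G_{pos}(\Eq)$ contains at least one cell with an explicit assignment, so every entry of $\sigma$ used in the output is defined.

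Consistency comes from the fact that $A$ itself satisfies all the equations and all the assignments. All cells in one component hold the same value in $A$, so any two assignments that land in the same component agree, and writing $\sigma$ in arbitrary order is harmless. For the same reason the output equals $A$ cell by cell, which proves the corollary.
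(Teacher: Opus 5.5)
Your proposal is correct and takes the same route as the paper, which proves the corollary simply by combining Lemma~\ref{lemma:reduction} with Theorem~\ref{theorem: main}. You also spell out a step the paper leaves implicit: running the solver on the equations alone, then labelling each connected component with the symbol of an explicit assignment it contains. Validity of $B$ guarantees that such an assignment exists, and the fact that $A$ satisfies every equation makes the choice consistent, so this fills that gap correctly.
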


\bibliographystyle{plain}
\bibliography{references}

\appendix

\section{Technical lemmas}

The next lemma extends the result in~\cite[Lemma 5]{GAWRYCHOWSKI2020} to our typed two-dimensional equation systems.

\begin{lemma}\label{lemma:MSF}
Let $E$ be an equation system and for $t=0,1,2,3$ let $G^t_{eq}$ be the subgraph of $G_{eq}$ formed by all its type-$t$ edges. 
It holds that $\bigcup_{t=0}^3 MSF(G^t_{eq})\equiv E$ where $MSF$ is the maximum spanning forest of a graph.
\end{lemma}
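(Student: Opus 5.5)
The plan is to prove the equivalence one type at a time and then take the union. Since $E'=\bigcup_t MSF(G^t_{eq})\subseteq E$, every solution of $E$ is a solution of $E'$. In terms of position graphs, $G_{pos}(E')$ is a subgraph of $G_{pos}(E)$. So it suffices to show the converse: every position equality generated by an equation $e\in E\setminus E'$ is already implied by $E'$, i.e., its endpoints lie in the same connected component of $G_{pos}(E')$. Then the two position graphs have the same connected components, hence $\Phi(E)=\Phi(E')$. The edge weights for the maximum spanning forest are the side lengths $\ell$.

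Fix a type $t$ and a type-$t$ edge $e=(p,p',\ell,t)$ of $G^t_{eq}$ that is not in $MSF(G^t_{eq})$. By the cycle property of maximum spanning forests, there is a path $p=q_0,q_1,\dots,q_r=p'$ in $MSF(G^t_{eq})$ whose edges $(q_{s},q_{s+1},\ell_s,t)$ all satisfy $\ell_s\ge \ell$. Here the type is essential: all these equations use the \emph{same} reference corner $t$. So each equation with side $\ell_s\ge\ell$, anchored at corner $t$ in both squares, contains as a sub-equation the $\ell\times\ell$ square anchored at the same corner. For example, for $t=0$, the equation $A[q_s..q_s+\ell_s-1]=A[q_{s+1}..q_{s+1}+\ell_s-1]$ (in 2D coordinates) implies that the $\ell\times\ell$ squares with top-left corners $q_s$ and $q_{s+1}$ coincide. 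For $t=2$, the squares are the ones whose bottom-right corner is $q_s$, and similarly for $t=1,3$. More precisely, for every offset $(a,b)\in[\ell]\times[\ell]$, oriented according to $t$ (e.g.\ $(-a,-b)$ for $t=2$), the position $q_s+(a,b)$ is joined to $q_{s+1}+(a,b)$ by an edge of $G_{pos}(MSF(G^t_{eq}))$.

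Chaining these edges along the path, $p+(a,b)$ and $p'+(a,b)$ are connected in $G_{pos}(E')$ for every offset. These are exactly the $\ell^2$ position equalities generated by $e$. This also covers all of $E$ at once: an edge of $E$ not in the forest is handled by the argument above, and an edge in the forest is in $E'$ trivially. Hence the components of $G_{pos}(E)$ and $G_{pos}(E')$ coincide.

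The main obstacle, mild as it is, is to state the corner-aligned restriction precisely for each of the four types. One must check that the orientation of the offset depends only on $t$, so that consecutive equations on the path restrict to the \emph{same} family of positions relative to the shared vertex $q_s$. This is why the forests are taken per type rather than over the whole graph. A cycle mixing types would fail, since a top-left-anchored sub-square and a bottom-right-anchored sub-square at the same vertex cover different cells. One also has to handle the standard MSF fact for ties and forests, not just trees; this follows from the usual cycle property applied within each connected component.
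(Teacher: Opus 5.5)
Your proof is correct and follows the same route as the paper's. For each type $t$, you use the cycle property of the maximum spanning forest to get a path of type-$t$ equations of side at least $\ell$, then use the fact that same-type corner-anchored squares nest, so $e$ follows by transitivity along that path; you are simply more explicit about the trivial direction ($E'\subseteq E$) and about phrasing equivalence as equality of the connected components of $G_{pos}$.
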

\begin{proof}
It suffices to prove that $MSF(G^t_{eq})\equiv E^t=\{e\in E\ \vert e\text{ has type } t\}$.
Consider any $e\in E^t$ not in $MSF(G^t_{eq})$. 
Adding $e$ to $MSF(G^t_{eq})$ must form a cycle $c_1 \leftrightarrow c_2 \ldots c_l\leftrightarrow c_1$ where every $c_i \leftrightarrow c_{j}$ has type $t$ and weight $l_i$ and $c_l \leftrightarrow c_1$ corresponds to $e$.
By the properties of the maximum spanning forest, it holds that $\ell_l\leq \min\{\ell_1,\ldots,\ell_{l-1}\}$, thus by transitivity $e$ is implied by equations in the path $P=c_1 \leftrightarrow c_2 \ldots c_{l-1}\leftrightarrow c_l$ of $MSF(G^t_{eq})$ since its squares are included in those of $P$ and have the same type.\qed
\end{proof}

\newcommand{\cifre}[2]{(#1\cdots #1)^{#2}_{16}}

In the following for $x\in \mathbb{N}$ we denote with $(x)_{16}$ the hexadecimal representation of $x$, and we denote with $(x)_{16}[i]$ the $i$-th digit, so that $x=\sum_{i=0}^\infty (x)_{16}[i]16^i$. 
For $j\in[0,15]$ and $k>0$ we write $\cifre{j}{k}$ to denote the positive integer consisting of $k$ hexadecimal digits equal to $j$, that is, $\cifre{j}{k} = j +16j+ \cdots + 16^{k-1}j$.

Finally, with $X+d$ where $X=x_1,\ldots,x_{\vert X \vert}$ and $d\in \mathbb{N}$ we denote the numbers in the set $\{x+d\ \vert\ x\in X\}$.

\begin{lemma}\label{lemma:l1}
Given non-negative integers $i_1,i_2,i_3,i_4$ and $y\in [0,11]$, let $z=y+4$. For any of the following statements and $k>0$ we can find in $O(k)$ time and $O(1)$ additional space an integer $d$ with $\cifre{y}{k} \leq d \leq \cifre{z}{k}$ such that the statement holds:
\begin{enumerate}[a)]
    \item\label{lemma:l1:1} $\{i_1,i_2,i_3,i_4\}+d$ are $k$-special;
    \item\label{lemma:l1:2} if $i_3,i_4 > \cifre{z}{k} $, then $\{i_1,i_2\}+d$ and $\{i_3,i_4\}-d$ are $k$-special;
    \item\label{lemma:l1:3} if $i_1,i_2,i_3,i_4> \cifre{z}{k}$ then $\{i_1,i_2,i_3,i_4\}-d$ are $k$-special.
\end{enumerate}
\end{lemma}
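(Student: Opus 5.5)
We construct $d$ one hexadecimal digit at a time, from the least significant position upward, so that the addition (or subtraction) carries are controlled. Write $d=\sum_{r=0}^{k-1} d_r 16^r$ with every digit $d_r\in[y,z]$. Such a $d$ automatically satisfies $\cifre{y}{k}\le d\le\cifre{z}{k}$, because digitwise comparison works for equal-length digit strings. A number is $k$-special exactly when its $k$ lowest hex digits are nonzero, so it suffices to choose the $d_r$ such that digit $r$ of each of the four shifted numbers $i_s\pm d$ is nonzero, for $r=0,\dots,k-1$.

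\textbf{The greedy step for addition (statement~\ref{lemma:l1:1}).} Suppose $d_0,\dots,d_{r-1}$ are already fixed. For each $s$, the carry $c_s\in\{0,1\}$ entering position $r$ of $i_s+d$ depends only on these lower digits, which in turn determine $i_s \bmod 16^r$. Digit $r$ of $i_s+d$ then equals $((i_s)_{16}[r]+d_r+c_s)\bmod 16$. This is zero for exactly one residue of $d_r$ modulo $16$, namely $d_r\equiv -(i_s)_{16}[r]-c_s$. The four indices $s$ therefore forbid at most four residues. The window $[y,z]$ contains five consecutive integers, hence five distinct residues mod $16$, so at least one admissible $d_r$ remains. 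We record the new carries and move to position $r+1$.

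Each step costs $O(1)$ time and keeps only the four carries, so the total is $O(k)$ time and $O(1)$ extra space. We also need the hex digits of the $i_s$; assuming word-RAM extraction of a hex digit by shift and mask, each costs $O(1)$.

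\textbf{Subtraction (statements~\ref{lemma:l1:2} and~\ref{lemma:l1:3}).} Here borrows take the place of carries. Digit $r$ of $i_s-d$ is $((i_s)_{16}[r]-d_r-\beta_s)\bmod 16$, where $\beta_s\in\{0,1\}$ is the incoming borrow. This again vanishes for exactly one residue of $d_r$. The hypothesis $i_s>\cifre{z}{k}\ge d$ guarantees $i_s-d> 0$, so the modular digit computation really does give the digits of the actual nonnegative integer $i_s-d$ and no final borrow escapes. For~\ref{lemma:l1:2} we mix two carry-tracked sums and two borrow-tracked differences. There are still only four forbidden residues per position, so the same counting argument applies.

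\textbf{Main point of care.} The step to check carefully is that the carry or borrow into position $r$ is fully determined by the already-chosen low digits, and that choosing $d_r$ never retroactively changes lower output digits. Both hold because addition and subtraction propagate only upward. The other thing to verify is the count $z-y+1=5>4$, which is exactly why the window is set at width $4$. Everything else is routine bookkeeping.
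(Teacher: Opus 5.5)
Your proposal is correct and follows essentially the same route as the paper. Both build $d$ greedily from the least significant hex digit upward, choosing each digit from the five-value window $[y,z]$ to avoid the at most four residues that would zero a digit of some $i_s\pm d$, with the hypothesis on $i_3,i_4$ (resp.\ all $i_s$) keeping the differences nonnegative; the only cosmetic difference is that you track carries and borrows explicitly and test the five candidates directly, whereas the paper reads the next digit off $i_s\pm d$ computed so far and looks up the admissible digit in a precomputed constant-size table.
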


\begin{proof}
Recall that an integer is $k$-special if none of its less significant $k$ hexadecimal digits is zero. 
We exhibit the proof of~\ref{lemma:l1:2} since~\ref{lemma:l1:1} and~\ref{lemma:l1:3} are analogous.
We preliminary observe that given any $4$ hexadecimal digits $h_1,\ldots,h_4$ there exists a value $w$ in every range $[y,y+4]\subset[16]$, such that $h_{z_1} + w \bmod 16 \neq 0$ and $16+h_{z_2}-w\bmod 16\neq 0$ hold for every $z_1\in \{1,2\}$ and $z_2\in \{3,4\}$. 
To see this, note that there are 4 constraints and 5 possible values in $[y,y+4]$. 
We precompute for every possible value of $y$ an array storing the minimum admissible $w$ for every quadruple of hexadecimal digits $h_1,\ldots,h_4$.

Next, we derive the desired $d$ by computing in $O(k)$ time its hexadecimal digits right to left starting from the less significant. The bounds on the size of $d$ will follow from the observation that it consists of $k$ hexadecimal digits between $y$ and $z = y+4$.  
Let $d_i$ denote the value of $d$ obtained after we computed $i$ digits.


We prove the correctness of our procedure by induction on $k$.
For $k=1$, with the precomputed arrays we find in $O(1)$ time $w$ such that $\{i_1,i_2\}+w$ and $\{i_3,i_4\}-w$ are $1$-special by using the four digits $h_1 = (i_1)_{16}[0],\ldots, h_4 = (i_4)_{16}[0]$ and we set $d_1=w$.
Note that $w\in [y, z]$ so borrowing is possible, if needed, since $i_3,i_4> z$ by hypothesis.
Now assume the statement holds for $k>1$, in particular $d_k \leq \cifre{z}{k}$ holds by induction.
Similarly to the base case, we retrieve $w$ from the precomputed arrays, using the $(k+1)$-st digit of $(\{i_1,i_2\}+d_k)_{16}$ and $(\{i_3,i_4\}-d_k)_{16}$.
By hypothesis it is $i_3,i_4> \cifre{z}{k+1}$ thus borrowing, if needed, is possible. We set $d_{k+1}=d_k+w16^k$.
The values $\{i_1,i_2\}+d_{k+1}$ and $\{i_3,i_4\}-d_{k+1}$ are $(k+1)$-special since by adding/subtracting $w16^k$ to the values obtained in the previous iteration, we make all the $(k+1)$-st digits non-zero without modifying any digit at positions $0,\ldots,k-1$ which are all non-zero by the induction.\qed
\end{proof}

Next we show that~\cite[Lemma 12]{GAWRYCHOWSKI2020} holds also in two dimensions with our definition of 2D $k$-special and $k$-short equations (see Definition~\ref{def:d1}), but with a multiplicative $O(k+1)$ time factor.

\begin{lemma}[{\sc SpecialSplit}]\label{lemma:specialsplit}
Given any set $E$ of $k$-special equations, we can build in $O((k+1)\vert E\vert)$ time two sets of equations $E_1,E_2$ such that: 1) $E_1$ is $(k+1)$-short and $k$-special 2) $E_2$ is $(k+1)$-special 3) $E_1 \cup E_2 \equiv E$ and 4) $\vert E_1\vert +\vert E_2\vert =O(\vert E\vert)$.
\end{lemma}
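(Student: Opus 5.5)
We handle each equation $e=(i,j,i',j',\ell,t)\in E$ on its own and replace it by $O(1)$ equations. Their squares together cover exactly the two squares of $e$, with matching offsets. This gives $E_1\cup E_2\equiv E$: every new equation is implied by $e$, and the new equations jointly imply all $\ell^2$ position equalities of $e$. Each equation costs $O(k+1)$ time, since it needs $O(1)$ calls to Lemma~\ref{lemma:l1}, which gives 4) and the time bound.

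\textbf{Short equations.} If $\ell\le 2\cdot 16^{k+1}$, we put $e$ into $E_1$ unchanged. It is $(k+1)$-short by definition and $k$-special by hypothesis.

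\textbf{Long equations.} Otherwise $\ell> 2\cdot16^{k+1}$. We may first re-express $e$ with respect to corner $0$, writing the two squares as $[r,r+\ell)\times[c,c+\ell)$ and $[r',r'+\ell)\times[c',c'+\ell)$. Note that the anchor need not stay $k$-special after this change, and the construction below does not rely on it.
\begin{enumerate}[a)]
\item Apply Lemma~\ref{lemma:l1}\ref{lemma:l1:1} with $k+1$ digits to $\{r,c,r',c'\}$. It gives a shift $d\le\cifre{4}{k+1}<16^{k+1}/2$, chosen with $y=0$ so that the digits lie in $[1,4]$, such that $r+d,c+d,r'+d,c'+d$ are all $(k+1)$-special.
\item The inner equation $(r+d,c+d,r'+d,c'+d,\ell-d,0)$ is anchored at a $(k+1)$-special corner $0$, so we put it into $E_2$. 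It covers everything except an L-shaped strip of width $d$ along the top and left sides.
\item Symmetrically, apply Lemma~\ref{lemma:l1}\ref{lemma:l1:3} to the bottom-right coordinates. Its hypothesis holds because $\ell$ is large, so these coordinates exceed $\cifre{z}{k+1}$. This gives a type-$2$ equation of side $\ell-d'$ in $E_2$.
\item Use analogous type-$1$ and type-$3$ equations, built with Lemma~\ref{lemma:l1}\ref{lemma:l1:2} since one coordinate is shifted up and the other down, to remove the remaining uncovered corner regions.
\end{enumerate}
After these four equations, every cell of the square lies in at least one of them. The only exceptions would be the corner blocks of side at most $\max(d,d')<16^{k+1}$ that none of the four shifted squares reaches.

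\textbf{Uncovered corner blocks.} It seems that with $\ell>2\cdot16^{k+1}$ the four shifted squares already cover everything: each one misses only strips of width $<16^{k+1}$ on two sides, and opposite-corner squares complement each other. If some region is nonetheless left over, we cover it with a square equation of side at most $2\cdot 16^{k+1}$ anchored at the original $k$-special corner, or at a corner whose coordinates are original coordinates $\pm(\ell-s)$. Such an equation goes into $E_1$.

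\textbf{Main obstacle.} The hard step is guaranteeing that every piece placed in $E_1$ is $k$-special at its reference corner. A square of side $s\le 2\cdot16^{k+1}$ sharing corner $t$ with the original square inherits $k$-specialness only if $t$ is the original type. We will therefore take the $E_1$ pieces to be the corner-$t$ square of $e$ at the original anchor. For the other corners we will instead invoke Lemma~\ref{lemma:l1} with $k$ digits to shift slightly and produce $k$-special anchors, which is exactly why the lemma provides all three sign patterns. The remaining work, checking that the small corner squares and the four large shifted squares together cover the whole square, is routine arithmetic using $d,d'<16^{k+1}/2$ and $\ell>2\cdot16^{k+1}$.
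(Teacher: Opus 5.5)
Your construction is correct, and it takes a different route from the paper.

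\textbf{The paper's route.} The paper follows the 1D split of Gawrychowski et al. For a long type-$0$ equation it keeps a short $d\times d$ square at the original $k$-special anchor and puts it in $E_1$. It adds the inner square of side $\ell-d$, and covers the two off-diagonal rectangles with a type-$3$ and a type-$1$ square obtained from part~b). For these to fit, it must pick $d$ with digits in $[5,9]$ and $\alpha$ with digits in $[0,4]$, so that $\alpha<d$ and $\ell-\alpha\ge\max(\ell-d,d)$. It also needs a separate case for equations that already have a $(k+1)$-special corner.

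\textbf{Your route.} You place four large squares flush against the four corners. Each is shifted inward by less than $16^{k+1}/2$, using parts a), b), b), c) of Lemma~\ref{lemma:l1}. So every piece of a long equation goes to $E_2$, and $E_1$ contains only the original short equations. The $k$-specialness of the input is then used only for those short equations.

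\textbf{What each buys.} Your version is more symmetric and needs no coordination between the shifts. It also needs neither the short piece nor the extra case. It uses the typed representation more fully: the paper's short piece mirrors the 1D proof, where position $i$ can only be covered by a piece starting at $i$. The paper's version stays closer to the original 1D argument and uses only parts a) and b) here.

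\textbf{Coverage needs an actual argument.} It is not a ``seems''. It is also not true that opposite-corner squares complement each other: when the shifts are positive, the type-$0$ and type-$2$ squares alone miss the other two corners. The correct argument works by quadrants:
\begin{itemize}
\item every shift $\delta$ satisfies $\delta\le 4(16^{k+1}-1)/15<\ell/2$;
\item so a cell at relative position $(x,y)$ has $x<\ell-\delta$ if $x<\ell/2$, and $x\ge\delta$ if $x\ge\ell/2$;
\item the same holds for $y$;
\item hence each cell lies in the square flush with the corner of its quadrant.
\end{itemize}
Once you state this, delete the ``uncovered corner blocks'' paragraph and the fallback via $k$-digit shifts in your ``main obstacle'' paragraph.

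\textbf{Minor points.} With $y=0$ the digits of $d$ lie in $[0,4]$, not $[1,4]$; $d=0$ is possible and harmless. The hypotheses of parts b) and c) hold, as you say, because every subtracted coordinate is at least $\ell-1\ge 2\cdot16^{k+1}$.
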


\begin{proof}
To prove the claim we split every $e\in E$ in $O(k)$ time into $O(1)$ equations that together are equivalent to $e$. 
Each such equation satisfies condition $1)$ or $2)$ and is added to $E_1$ or $E_2$ accordingly.
Let $e=(i,j,i',j',\ell,t)$ be a $k$-special equation in $E$ and assume $t=0$, the proofs for $t=1, 2, 3$ are analogous.
{If $e$ is already $(k+1)$-short, we add it to $E_1$. 
If this is not the case, but one of its corners is $(k+1)$-special, 
we change the type of $e$ accordingly (if needed) and we add it to $E_2$. 
In either case, we are done.}
Otherwise we replace $e$ with $4$ new equations $e_1,e_2,e_3,e_4$ that together are equivalent to $e$.
Equation $e_1$ is $k$-special and $(k+1)$-short so we will add it to $E_1$ while the remaining ones are all $(k+1)$-special and thus we will add them to~$E_2$. 
Let $d$ denote the integer obtained in $O(k)$ time by applying Lemma~\ref{lemma:l1} part~\ref{lemma:l1:1} to the integers 
$i,j,i',j'$ with $y=5$ and parameter $k+1$. Note that by the lemma it is $d \leq \cifre{9}{k+1} <  10 \cdot 16^k$; since $e$ is not $(k+1)$-short, $\ell>2\cdot16^{k+1}>2d$.
We set $e_1=(i,j,i',j',d,0)$, thus $e_1$ is $k$-special and $(k+1)$-short as claimed. Next, we set $e_2=(i+d,j+d,i'+d,j'+d,\ell-d,0)$, which by Lemma~\ref{lemma:l1} is $(k+1)$-special.

The derivation of the last two equations is more complex: they are of type 3 and 1 and we need them to cover respectively the rectangles $R_{BL}$ and $R_{TR}$ of Fig.~\ref{fig:alpha}
to ensure that $\{e_1,\ldots,e_4\} \equiv e$.
To compute $e_3$ we consider the four integers 
$
i,j+\ell-1, i', j'+\ell-1
$
which are the coordinates of the upper right corners of~$e$. Applying Lemma~\ref{lemma:l1} part~\ref{lemma:l1:2} with $y=0$ and parameter $k+1$ we find an integer $\alpha$ such that
$$
i + \alpha,j+\ell-1-\alpha, i' + \alpha, j'+\ell-1 - \alpha
$$
are $(k+1)$-special. Hence if we set 
$$
e_3 = (i + \alpha,j+\ell-1-\alpha, i' + \alpha, j'+\ell-1 - \alpha,\ell-\alpha,3)
$$
by construction $e_3$ is $(k+1)$-special. By Lemma~\ref{lemma:l1} we also have 
$
\alpha \leq \cifre{4}{k+1} < \cifre{5}{k+1} \leq d
$
which ensures that $e_3$ covers $R_{BL}$, i.e. that $\ell-\alpha \geq \max(\ell-d,d)$.  The construction of $e_4$ to cover $R_{TR}$ is analogous starting with the lower right corners of $e$.\qed
\end{proof}

\begin{figure}[t]
\centering
\begin{subfigure}[t]{0.48\textwidth}
    \centering
    \includegraphics[height=0.28\textheight,keepaspectratio]{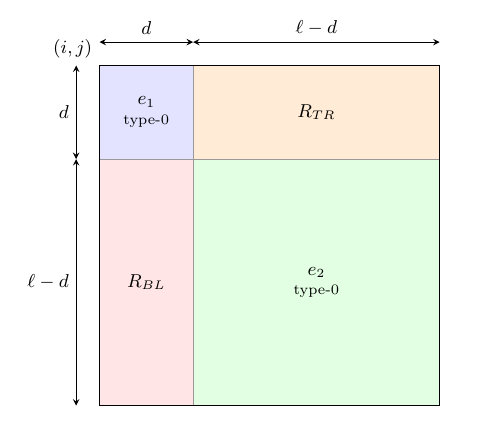}
    \label{fig:special_split1}
\end{subfigure}\hfill
\begin{subfigure}[t]{0.48\textwidth}
    \centering
    \includegraphics[height=0.28\textheight,keepaspectratio]{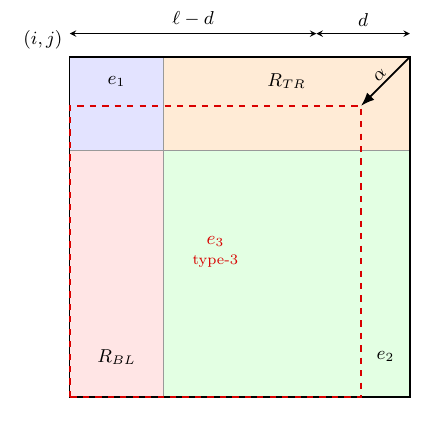}
    \label{fig:special_split2}
\end{subfigure}
\caption{Construction of the equations $e_1, e_2, e_3$ used in Lemma~\ref{lemma:specialsplit} that together with $e_4$ (not shown) are equivalent to~$e=(i,j,i',j',\ell,0)$. The requirement is that the union of the submatrices associated with $e_1,\ldots,e_4$ coincides with the submatrix associated to $e$. The parameters $d$ and $\alpha$ are chosen to ensure that $e_1$ is $k$-short and the reference corner $v$ of the type-$v$ equations $e_2,e_3$ are $(k+1)$-special.\label{fig:alpha}} 
\end{figure}

The next lemma extends~\cite[Lemma 13]{GAWRYCHOWSKI2020} to two-dimensional equation systems.
Again we have an additional $O(k)$ multiplicative factor in the time.

\begin{lemma}[{\sc SimpleSplit}]\label{lemma:simplesplit}
Given any set $E$ of $k$-special and $(k+1)$-short equations, we can build in $O(k\vert E\vert)$ time an equivalent $k$-special and $k$-short system of size $O(\vert E\vert)$. 
\end{lemma}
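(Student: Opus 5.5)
We process each equation $e=(i,j,i',j',\ell,t)\in E$ independently and replace it by $O(1)$ equations that together are equivalent to $e$. Each of them must be $k$-special and $k$-short, i.e.\ have side at most $2\cdot16^k$. By symmetry we assume $t=0$; the other types are handled by mirroring coordinates, as in Lemma~\ref{lemma:specialsplit}. If $\ell\le 2\cdot16^k$ we keep $e$ unchanged. Otherwise $2\cdot16^k<\ell\le 2\cdot16^{k+1}$, so $\ell/16^k\le 32$. We then cover the $\ell\times\ell$ square by a grid of $O(1)$ overlapping sub-squares of side between $16^k$ and $2\cdot16^k$, with a constant number of rows and columns of sub-squares (at most about $32$ each). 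Since every sub-square is a translate of one region in both copies of $e$, the resulting equations are implied by $e$. Conversely, because they cover the whole square, they imply $e$ position-wise. Equivalence of the systems therefore follows from the position graph characterisation in Section~\ref{sec:notation}.

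The real work is choosing the anchors so that every new equation is $k$-special with respect to its reference corner. We use the same trick as in Lemma~\ref{lemma:specialsplit}. The grid offsets are picked with Lemma~\ref{lemma:l1}, applied to the four coordinates $i,j,i',j'$ (or, for pieces anchored at other corners, to the corresponding corner coordinates together with parts~\ref{lemma:l1:2} and~\ref{lemma:l1:3}), with parameter $k$. This yields shifts $d$ with $\cifre{y}{k}\le d\le\cifre{z}{k}$, so each shift lies roughly in $[y\,16^k/15,\,z\,16^k/15]$.

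Concretely, we proceed row of pieces by row of pieces. Starting from the $k$-special corner $(i,j),(i',j')$, we repeatedly apply part~\ref{lemma:l1:1} with a fixed $y$, for instance $y=8$, so that $d\approx 16^k/2\cdot$const. This produces the next anchor $(i+d,\,\cdot)$ in the row direction, and similarly in the column direction. Each application advances by at least $\cifre{y}{k}\ge 16^k/2$ and at most $\cifre{z}{k}<16^k$. Hence consecutive anchors are spaced less than $16^k$ apart, and squares of side $2\cdot16^k$, or rather any side $s\in[d,2\cdot16^k]$, anchored there overlap and cover everything. Near the far edges the last pieces would overflow the square. For those we switch type, anchoring at the bottom/right corners of $e$, and use parts~\ref{lemma:l1:2} and~\ref{lemma:l1:3} to subtract shifts from those corners, exactly as $e_3,e_4$ are built in Lemma~\ref{lemma:specialsplit}. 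The hypotheses of those parts ($i_3,i_4>\cifre{z}{k}$) hold because $\ell>2\cdot16^k$ gives room.

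Since $\ell\le 2\cdot16^{k+1}$, the number of pieces is at most about $(32\cdot 2)^2=O(1)$, and each anchor costs $O(k)$ time by Lemma~\ref{lemma:l1}. This gives $O(k|E|)$ total time, plus $O(|E|)$ output size.

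The main obstacle is guaranteeing full coverage when anchors are chosen greedily with only approximate step sizes. In particular, the anchor coordinates in the two dimensions must be made $k$-special \emph{simultaneously for both copies}, which is why part~\ref{lemma:l1:1} takes four integers. Shifting rows and columns by the same $d$ keeps the equation square-shaped and diagonal, so pieces off the diagonal need anchors $(i+d_1,j+d_2)$ with $d_1\ne d_2$. We would handle this by applying Lemma~\ref{lemma:l1} separately to the pair $\{i,i'\}$ (padded to four integers) for row offsets and to $\{j,j'\}$ for column offsets, since $k$-speciality is a per-coordinate condition. We would then check that the spacing bound $<16^k\le$ side still ensures overlap.
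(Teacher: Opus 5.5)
\paragraph{The interior of your construction.} Your global plan is sound: replace each long equation by $O(1)$ pieces of side at most $2\cdot16^k$ that lie inside the two matching squares, each with a $k$-special reference corner. Once you decouple row offsets from column offsets, your greedy interior grid works. The paper gets the interior more cheaply. The anchors $(i+z_1 16^k,\, j+z_2 16^k)$ are automatically $k$-special, because adding a multiple of $16^k$ does not change the $k$ low hexadecimal digits. So no call to Lemma~\ref{lemma:l1} is needed there, and what remains uncovered is exactly a strip of width $l=\ell\bmod 16^k$ along the bottom and right edges plus an $l\times l$ corner square.

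\paragraph{The gap: the boundary.} All the difficulty of the lemma sits at the boundary, and your one-sentence treatment of it does not work as written. Let $S_e=A[i..i+\ell-1][j..j+\ell-1]$. Its last row $i+\ell-1$ and last column $j+\ell-1$ are arbitrary integers. Pieces must stay inside $S_e$ to be implied by $e$, so any piece containing a cell of row $i+\ell-1$ has its bottom edge exactly on that row. If its reference corner is a bottom corner (type 1 or 2), that corner has row coordinate $i+\ell-1$. Subtracting a shift to make it $k$-special moves the piece upward and leaves the last rows uncovered. In particular, the cell $(i+\ell-1,j+\ell-1)$ can only be covered by a type-0 piece with top-left corner $(i+\ell-s,\,j+\ell-s)$. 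You must exhibit an $s\le 2\cdot 16^k$ making all four of $i+\ell-s$, $j+\ell-s$, $i'+\ell-s$, $j'+\ell-s$ $k$-special.

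\paragraph{The missing idea.} Anchor each boundary piece at the corner \emph{opposite} the boundary, and let the side length absorb the shift. The paper does this in three steps.
\begin{itemize}
\item It takes the $16^k\times16^k$ square $Q$ resting on the bottom edge, with top-right corner $(v,h)$, and moves that corner along the anti-diagonal to $(v-\alpha,h+\alpha)$. Here $\alpha<5\cdot16^{k-1}$ comes from part~b) and the new side is $16^k+\alpha<2\cdot16^k$. The piece is of type 3, and the bottom and left edges of $Q$ stay fixed.
\item For the rightmost such square, it instead moves the top-left corner along the diagonal using part~c), giving a type-0 piece.
\item It anchors the corner square at $(i+16^k\lfloor \ell/16^k\rfloor,\, j+16^k\lfloor \ell/16^k\rfloor)$, which is $k$-special for free.
\end{itemize}
You cite parts~b) and~c), but you never say which corner of which square is moved. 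Nor do you check that the moved piece still reaches the boundary, stays inside $S_e$, and remains $k$-short, and that verification is the content of the lemma.

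You also need to treat $k=0$ separately, since Lemma~\ref{lemma:l1} requires $k>0$ and your step size degenerates there. In that case every integer is $0$-special, so the claim is trivial.
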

\begin{proof} 
Let $e=(i,j,i',j',\ell,t)$ be any equation in $E$ that we assume w.l.o.g. to be of type $t=0$. 
The proofs for the other types are symmetrical.
If $e$ is already $k$-short we have done, otherwise $2\cdot 16^k< \ell \leq 2\cdot16^{k+1}$ hold and we cover the two matching squares $S_e$ and $S_e'$ corresponding to $e$ as follows.
We add $O(1)$ $k$-short equations of the form $(i+z_116^k,j+z_216^k,i'+z_116^k,j'+z_216^k,16^k,0)$ for every integers $z_1,z_2\in[0,\lfloor\ell/16^k\rfloor-1]$. 
These equations are $k$-special because $\{i,j,i',j'\}\subset \mathbb{S}_k$ and adding multiples of $16^k$ does not change the $k$-specialty of the coordinates.
If $\ell \bmod 16^k=0$, the above equations partition $S_e$ and $S_e'$ and are equivalent to $e$.
Thus next we assume $k>0$ and $\ell \bmod 16^k\neq0$.
In this case, $O(1)$ rectangles are left uncovered in the bottom and right edges of $S_e$ and $S_e'$, plus $2$ squares in their bottom-right corners.
To cover the squares we add a type-$0$ equation matching them which is already $k$-special and $k$-short.
Now consider any two matching rectangles $R$ and $R'$ in the bottom-edge of $S_e$ and $S_e'$, excluding the rightmost ones.
These rectangles have size $l\times 16^k$ with $l=\ell \bmod 16^k$ and their bottom-left corners are anchored at the positions $(i+\ell-1,j+z_216^k)$ and $(i'+\ell-1,j'+z_216^k)$ respectively, for some $z_1,z_2$.
To cover $R$ and $R'$, we start from the $16^k\times 16^k$ squares $Q,Q'$ with top-right corners at $(v,h)=(i+\ell-16^k,j+(z_2+1)16^k-1)$ and $(v',h')=(i'+\ell-16^k,j'+(z_2+1)16^k-1)$, that respectively include $R$ and $R'$. 
Then by using Lemma~\ref{lemma:l1} part~\ref{lemma:l1:2} with $y=0$ and parameter $k$, we find a nonnegative $\alpha<5\cdot 16^{k-1}$ such that the integers $v-\alpha$, $h+\alpha$, $v'-\alpha$ and $h'+\alpha$ are all $k$-special.
The resulting $k$-special and $k$-short equation is $(v-\alpha, h+\alpha, v'-\alpha, h'+\alpha,16^k+\alpha,3)$.
Similarly, to cover the $2$ rightmost rectangles in the bottom-edge of $S_e$ and $S'_e$, we still consider the squares $Q$ and $Q'$ including them, but now with $(v,h)$ and $(v',h')$ we refer the coordinates of their top-left corners.
We find an $\alpha<5\cdot 16^{k-1}$ using Lemma~\ref{lemma:l1} part~\ref{lemma:l1:3} with $y=0$ and parameter $k$ such that $v-\alpha$, $h-\alpha$, $v'-\alpha$ and $h'-\alpha$ are all $k$-special.
In this case the added equation is $(v-\alpha, h-\alpha, v'-\alpha,h'-\alpha,16^k+\alpha,0)$ which is $k$-special and $k$-short.
See Figure~\ref{fig:simple_split} for a graphical representation of the partitioning.
We can handle similarly all the $O(1)$ rectangles on the right edge of $S_e$ and $S'_e$.
With the above procedure we add $O(\vert E\vert)$ equations in total, each generated in $O(k)$ time, thus the claimed time and space follows.

\end{proof}

\begin{figure}[!tp]
\centering
\begin{subfigure}[t]{0.48\textwidth}
    \centering
    \includegraphics[height=0.28\textheight,keepaspectratio]{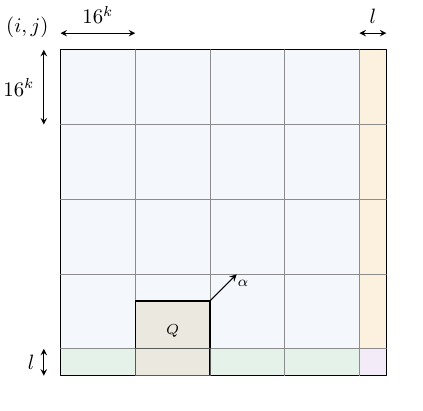}
    \label{fig:simple_split1}
\end{subfigure}\hfill
\begin{subfigure}[t]{0.48\textwidth}
    \centering
    \includegraphics[height=0.28\textheight,keepaspectratio]{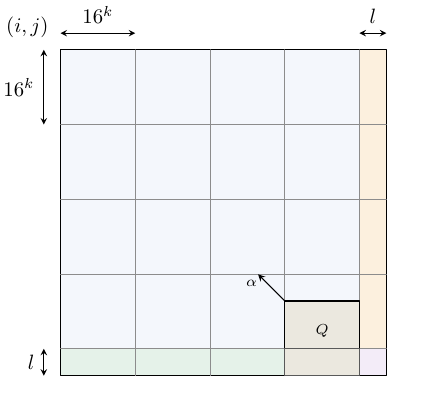}
    \label{fig:simple_split2}
\end{subfigure}
\caption{
Partitioning of a $k$-special and $k+1$-short equation $e$ as described in Lemma~\ref{lemma:simplesplit}, here $l=\ell \bmod 16^k$ where $\ell$ is the equation length. The two squares in the figure represents $S_e$ i.e. one of the two matching squares corresponding to $e$.
The $l\times l$ square in the bottom-right corner and all the $16^k\times 16^k$ squares are $k$-special and $k$-short.
In the figure on the right, the square $Q$, which covers the rightmost rectangle in the bottom edge, is extended along its anti-diagonal while the remaining rectangles are covered as shown in the figure on the left. 
\label{fig:simple_split}} 
\end{figure}

\end{document}